\newtheorem{theorem}{Theorem}[section]
\newtheorem{lemma}[theorem]{Lemma}
\newtheorem{proposition}[theorem]{Proposition}
\newtheorem{corollary}[theorem]{Corollary}
\newtheorem{rmq}[theorem]{Remark}
\newtheorem{definition}[theorem]{Definition}
\newcommand \A{\mathcal{A}}
\newcommand \F{\mathcal{F}}
\numberwithin{equation}{section}
\title[]{Call option on the maximum of the interest rate in the one factor affine model} 
\author{Mohamad Houda}
\address{Normandie Univ., Laboratoire Rapha\"el Salem,
UMR CNRS 6085, Rouen, France}
\email{mhamed.machhour@gmail.com}
\date{September 09, 2013}
\begin{document}

\maketitle
\setcounter{section}{0}
\begin{abstract}
We determine an explicit formula for the Laplace transform of the price of an option on a maximal interest rate when the instantaneous rate satisfies Cox-Ingersoll-Ross's model. This generalizes considerably one result of Leblanc-Scaillet \cite{LeblancScaillet1998}.

\vspace{2mm}
Keywords: affine model, hitting time, Laplace transform.
\end{abstract}

\section{Introduction}
The models of the instantaneous rate are mainly used to price and cover the discount bonds and the options on the discount bonds. Up to now, no model has been able to triumph as a reference model such as Black-Scholes's model for the options on assets. In this paper, our aim is to price interest rate option namely European path dependent option on yields. More precisely, we are interested in a call on a maximum. For this type of option, we give an analytical pricing formula.

In order to answer our pricing problem, we consider the affine class of one factor term structure models with time invariant parameter studied in a more general framework by Duffie and Kan \cite{DuffieKan1996}. Particular cases are the Cox et al. (CIR) (1985) and Vasicek (1977) models. In these models, the yield of the discount bond is an affine function of the instantaneous rate.

In section 2 an analytical formula for a European path dependent option on yields is derived. We examine an option on maximum and we discuss the problem of the numerical implementation of this formula. Some pratical results for the European call option on maximum are presented. A proof of Novikov's condition and a simulation program are gathered in appendices.

Let us first give some definitions and recall some preliminary results.
\begin{definition}{\em\cite[$p.504$ $(13.1.1)$]{AbramowitzStegun1965}}\label{Kummer}
For $c \in \mathbf{R}$ and $b\in \mathbf R$, such that $-b\not\in \mathbf N$, \textit{Kummer's function} is defined by 
$$
M(c, b, z) = 1 +\frac{cz}{b} +\frac{(c)_2 z^2}{(b)_2 2!} + ........+\frac{(c)_n z^n}{(b)_n n!} + ....
$$
where,
$
(c)_n = c(c+1)(c+2).....(c+n-1) ,\,(c)_0 = 1 
$.
\end{definition}
%%%%%%%%%%%%%%%%%%%
\begin{rmq}
The function $M$ is analytical on $\mathbf R$ and satisfies \textit{Kummer's equation}
\begin{equation}\label{eq.K}
z\frac{d^2w}{dz^2} + (b-z)\frac{dw}{dz} - c\,w = 0.
\end{equation}
As $M$ is analytical, it is bounded in a neighborhood of $0$.
\end{rmq}
%%%%%%%%%%%%%%%%%
\vspace{2mm}
Let $(\Omega, \F, (\F_t)_{t\geq 0}, Q)$ be a filtered probability space, and let $(w(t))_{t\geq 0}$ be an $(\F_t)_{t\geq 0} $ standard Brownian motion. Let $y_t$ be the solution of the stochastic differential equation 
\begin{equation}\label{eds}
d y_t = \mu(y_t)\,dt \, + \sigma(y_t)\,d w(t),
\end{equation}
starting from $y_0$.
%%%%%%%%%%%%%%%%%%%%
\begin{theorem}{\em\cite[$p.354$]{LeblancScaillet1998}}\label{L.S} 
Let $T^{(y)}_a = \inf\{t; y_t \geq a\}$, and let $V$ a bounded function on $[0, a]$, such that $\A(V) = \gamma V$, where 
$$
\A = \frac{1}{2}\sigma^2(y)\frac{d^2}{dy^2} + \mu(y)\frac{d}{dy}
$$
is the infinitesimal generator associated of \ref{eds}. Then, when $y_0 \leq a$, we have 
$$
E^Q_{y_0}[e^{-\gamma T^{(y)}_a} 1_{T^{(y)}_a < \infty}] = \frac{V(y_0)}{V(a)}.
$$
\end{theorem}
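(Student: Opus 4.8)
The plan is to turn the process $e^{-\gamma t}V(y_t)$ into a martingale and apply the optional stopping theorem at $T^{(y)}_a$. First I would observe that, since $V$ solves the second order linear ODE $\A(V)=\gamma V$ whose coefficients $\tfrac12\sigma^2$ and $\mu$ are (locally) smooth, $V$ is $C^2$ on $[0,a]$, so Itô's formula applies to it. Setting $M_t := e^{-\gamma t}V(y_t)$, Itô's formula combined with the relation $\A(V)=\gamma V$ gives
$$
dM_t = e^{-\gamma t}\bigl(\A(V)(y_t)-\gamma V(y_t)\bigr)\,dt + e^{-\gamma t}\sigma(y_t)V'(y_t)\,dw(t) = e^{-\gamma t}\sigma(y_t)V'(y_t)\,dw(t),
$$
so that $(M_t)_{t\ge 0}$ is a continuous local martingale.

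Second, I would localize at $T^{(y)}_a$. For $t\le T^{(y)}_a$ the diffusion $y$ stays in $[0,a]$ (it is the nonnegative CIR process, started from $y_0\le a$), hence $|M_{t\wedge T^{(y)}_a}|\le e^{-\gamma(t\wedge T^{(y)}_a)}\sup_{[0,a]}|V|\le \sup_{[0,a]}|V|$ since $\gamma\ge 0$. A bounded local martingale is a true martingale, so the optional stopping theorem yields $E^Q_{y_0}[M_{t\wedge T^{(y)}_a}]=M_0=V(y_0)$ for every $t\ge 0$.

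Third, I would let $t\to\infty$. On $\{T^{(y)}_a<\infty\}$ the continuity of $y$ gives $M_{t\wedge T^{(y)}_a}\to e^{-\gamma T^{(y)}_a}V(a)$, while on $\{T^{(y)}_a=\infty\}$ one has $M_{t\wedge T^{(y)}_a}=e^{-\gamma t}V(y_t)\to 0$ because $V$ is bounded and $\gamma>0$. All these quantities are dominated by the constant $\sup_{[0,a]}|V|$, so dominated convergence gives $V(y_0)=E^Q_{y_0}[e^{-\gamma T^{(y)}_a}V(a)\,1_{T^{(y)}_a<\infty}]=V(a)\,E^Q_{y_0}[e^{-\gamma T^{(y)}_a}\,1_{T^{(y)}_a<\infty}]$, and dividing by $V(a)$ (nonzero, as is implicit in the normalization $V(y_0)/V(a)$; e.g. $V>0$ on $[0,a]$) yields the claimed formula.

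The main obstacle I expect is not the computation but the two limiting arguments: promoting the local martingale to a genuine martingale and showing that the contribution of $\{T^{(y)}_a=\infty\}$ disappears. Both rely essentially on the hypothesis that $V$ is bounded on $[0,a]$ — this is exactly why it appears in the statement — and on $\gamma\ge 0$ (with $\gamma>0$, or $T^{(y)}_a<\infty$ a.s., needed to kill the $\{T^{(y)}_a=\infty\}$ term). I would also verify carefully the regularity of $V$ and that $V(a)\neq 0$, which the statement tacitly assumes.
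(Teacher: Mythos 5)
Your argument is correct and is the canonical proof of this result: make $e^{-\gamma t}V(y_t)$ a local martingale via It\^o and the eigenvalue equation $\A V=\gamma V$, note it is bounded up to $T^{(y)}_a$ because $V$ is bounded on $[0,a]$ and $\gamma\ge 0$, apply optional stopping, and pass to the limit by dominated convergence. The paper itself offers no proof of this theorem --- it is quoted from Leblanc--Scaillet \cite[p.~354]{LeblancScaillet1998} --- and the reference's own argument is essentially the one you give, so there is nothing to compare beyond noting that the caveats you flag (the need for $\gamma>0$ or $T^{(y)}_a<\infty$ a.s.\ to kill the $\{T^{(y)}_a=\infty\}$ term, $V(a)\neq 0$, and the process remaining in $[0,a]$ up to the hitting time) are exactly the tacit hypotheses under which the theorem is used later in the paper, where $V$ is a positive Kummer function and the shifted CIR process is nonnegative.
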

%%%%%%%%%%%%%%%%%%%%
%%%%%%%%%%%%%%%%%%%%
\section{one factor affine model}
We denote by $B(t, t+\tau)$ the price at time $t$ of a discount bond of maturity $t +\tau$, i.e. the price of the asset delivering one euro at time $t+\tau$ ($\tau$ is independent of $t$).

The yield corresponding to this bond $Y(t, t+\tau)$ at time $t$ with maturity $\tau$ is defined by
$$
Y(t, t+\tau) = -\frac{1}{\tau}\log B(t, t+\tau).
$$ 
Assume that, for each $\tau$, the yield $Y(t, t+\tau)$ is an affine function of the instantaneous interest rate $r(t)$ :
$$
Y(t, t+\tau) = \frac{1}{\tau}[A(\tau)r(t) + b(\tau)]\,,
$$  
and that $r(t)$ is also a solution of the stochastic differential equation of type \eqref{eds}. In this case, under the risk neutral probability $Q$, the instantaneous interest rate $r(t)$ satisfies 
the stochastic differential equation
\begin{equation}\label{eaffine}
d r(t) = (\phi - \lambda r (t))\,dt + \sqrt{\alpha r(t) + \beta} \, d w(t) ,
\end{equation}
starting from $r(0)=r_0$ \cite{DuffieKan1996}.
%%%%%%%%%%%%%%%%%%
%%%%%%%%%%%%%%%%%%
\section{Option european on the maximum}
The price at date $0$ of a European call option on maximum of maturity $T$ and strike price $K$ is given by 
$$
C(\sup_{u\in[0,T]}Y(u, u+\tau),0,T,K) = E^Q_{r_0}[e^{-\int_0^Tr_s ds}(\sup_{u\in[0,T]}Y(u,u+\tau)-K)_+]\,.
$$ 
By using the affine form $Y(t, t+\tau)$ with respect to $r_t$, the above formula can be written as 
$$
C(\sup_{u\in[0,T]} Y(u, u+\tau),0,T,K) = \frac{A(\tau)}{\tau}\,C(\sup_{u\in[0,T]} r_u,0,T,k),
$$
where $C(\sup_{u\in[0,T]} r_u,0,T,k)$ represents the price of European call on the instantaneous rate with a strike $k = \frac{\tau K -b(\tau)}{A(\tau)}$.
%%%%%%%%%%%%%%%
\begin{theorem}\label{Th3.1}
Let $r_t$ the solution of \eqref{eaffine}, then
$$
E^Q_{r_0}[e^{-\gamma T^{(r)}_a}] = \frac{M(\frac{\gamma}{\lambda},(\phi + \frac{\lambda\beta}{\alpha})\frac{2}{\alpha} , \frac{2\lambda(r_0+\beta/\alpha)}{\alpha})}{M(\frac{\gamma}{\lambda},(\phi + \frac{\lambda\beta}{\alpha})\frac{2}{\alpha} , \frac{2\lambda (a + \beta/\alpha)}{\alpha})}.
$$
\end{theorem}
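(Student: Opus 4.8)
The plan is to apply Theorem~\ref{L.S} with the diffusion $y_t = r_t$ satisfying \eqref{eaffine}, so that $\mu(r) = \phi - \lambda r$ and $\sigma^2(r) = \alpha r + \beta$. The infinitesimal generator is therefore
$$
\A = \frac{1}{2}(\alpha r + \beta)\frac{d^2}{dr^2} + (\phi - \lambda r)\frac{d}{dr}.
$$
By Theorem~\ref{L.S}, it suffices to exhibit a function $V$, bounded on $[0,a]$, solving $\A V = \gamma V$; the Laplace transform of the hitting time is then $V(r_0)/V(a)$. So the real content is to solve the second-order linear ODE
$$
\frac{1}{2}(\alpha r + \beta)V'' + (\phi - \lambda r)V' - \gamma V = 0
$$
and identify the bounded solution with the stated Kummer function.

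First I would introduce the affine change of variable that removes $\beta$ from the leading coefficient: set $z = z(r) = \frac{2\lambda}{\alpha}\bigl(r + \beta/\alpha\bigr)$, so that $\alpha r + \beta = \frac{\alpha^2}{2\lambda} z$ and $\phi - \lambda r = \phi + \frac{\lambda\beta}{\alpha} - \frac{\alpha}{2} z$. Writing $V(r) = w(z(r))$ and using $\frac{dz}{dr} = \frac{2\lambda}{\alpha}$, a direct substitution turns the ODE into
$$
z\,w'' + \Bigl(\bigl(\phi + \tfrac{\lambda\beta}{\alpha}\bigr)\tfrac{2}{\alpha} - z\Bigr) w' - \frac{\gamma}{\lambda}\, w = 0,
$$
which is exactly Kummer's equation \eqref{eq.K} with parameters $c = \gamma/\lambda$ and $b = \bigl(\phi + \frac{\lambda\beta}{\alpha}\bigr)\frac{2}{\alpha}$. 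Hence $w(z) = M(c,b,z)$ is a solution, and $V(r) = M\bigl(\tfrac{\gamma}{\lambda}, \bigl(\phi+\tfrac{\lambda\beta}{\alpha}\bigr)\tfrac{2}{\alpha}, \tfrac{2\lambda(r+\beta/\alpha)}{\alpha}\bigr)$ solves $\A V = \gamma V$. Substituting $V(r_0)$ and $V(a)$ into Theorem~\ref{L.S} gives the claimed formula.

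The step I expect to require the most care is the boundedness hypothesis of Theorem~\ref{L.S}: one must check that $V$ is bounded on $[0,a]$. Since $M(c,b,\cdot)$ is analytical on $\mathbf{R}$ (hence continuous) and the map $r \mapsto z(r)$ is continuous, $V$ is continuous on the compact interval $[0,a]$ and therefore bounded — provided the argument $z(r)$ stays in the region where $M$ is well-defined, i.e.\ $-b \notin \mathbf{N}$, which is a parameter condition one should flag (it holds generically, e.g.\ when $\phi, \beta \ge 0$ and $\alpha, \lambda > 0$). A secondary point worth mentioning is that $M(c,b,\cdot)$ is the solution of Kummer's equation that is regular at $z=0$; the other independent solution blows up there, so among solutions of $\A V = \gamma V$ the Kummer function is the natural bounded choice near the lower boundary. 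With these observations the computation is routine: expand $V'$ and $V''$ via the chain rule, collect terms, and match coefficients against \eqref{eq.K}.
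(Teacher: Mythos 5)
Your proposal follows essentially the same route as the paper: identify the generator of \eqref{eaffine}, reduce the eigenvalue equation $\A V=\gamma V$ to Kummer's equation \eqref{eq.K} via the affine substitution $z=\frac{2\lambda}{\alpha}(r+\beta/\alpha)$, take $V=M(\frac{\gamma}{\lambda},(\phi+\frac{\lambda\beta}{\alpha})\frac{2}{\alpha},z)$, and apply Theorem \ref{L.S}. The paper merely organizes the same computation in two stages (first shifting the process to $\tilde r=r+\beta/\alpha$, then scaling), and your verification of the ODE and of boundedness by analyticity of $M$ on a compact interval matches the paper's.

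One step you skip: Theorem \ref{L.S} yields $E^Q_{r_0}[e^{-\gamma T^{(r)}_a}\,1_{T^{(r)}_a<\infty}]=V(r_0)/V(a)$, whereas the theorem asserts the formula for $E^Q_{r_0}[e^{-\gamma T^{(r)}_a}]$ without the indicator. The paper closes this by letting $\gamma\to 0^+$ and using $M(0,b,z)=1$ to conclude $Q(T^{(r)}_a<\infty)=1$, after which the indicator may be dropped. Under the usual convention $e^{-\gamma\cdot\infty}=0$ for $\gamma>0$ the two expectations coincide anyway, so this is a minor omission rather than an error, but you should state explicitly how you pass from the truncated expectation given by Theorem \ref{L.S} to the untruncated one in the claim.
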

%%%%%%%%%%%%%%%%%%%
\begin{proof}
By using the change of variable $\tilde{r}(t) = r(t) + \frac{\beta}{\alpha}$ ; then $ \tilde{r}(t)$ satisfies
\begin{eqnarray}\label{eq.3.1}
d \tilde{r}(t) = (\phi + \frac{\lambda \beta}{\alpha}- \lambda \tilde{r}(t))\,dt + \sqrt{\alpha \tilde{r}(t)} \, d w(t)\,.
\end{eqnarray}
In this case, the infinitesimal generator of $\tilde{r}$ process is
$$
\tilde{\A}= (\phi + \frac{\lambda \beta}{\alpha}- \lambda \tilde{r})\frac{d}{d\tilde{r}} + \frac{\alpha}{2} \tilde{r} \frac{d^2}{d\tilde{r}^2}\,.
$$
Let $V(x) = M(\frac{\gamma}{\lambda} , \frac{2}{\alpha}(\phi + \frac{\lambda \beta}{\alpha}) , \frac{2\lambda x}{\alpha})$. The function $V$ is bounded on $[0, a + \frac{\beta}{\alpha}]$ (by the properties of Kummer's equation \eqref{eq.K}), and
$$
\tilde{\A} V(x) = (\phi + \frac{\lambda \beta}{\alpha}- \lambda x)V'(x) + \frac{\alpha}{2} x V''(x) . 
$$
Let us denote $\tilde{\phi} = \phi + \frac{\lambda \beta}{\alpha}$. By using the change of variable $z = \frac{2\lambda x}{\alpha} $, we get
\begin{eqnarray*}
\tilde{\A} V(x) &=& (\tilde{\phi}- \lambda x)\, \frac{\partial M}{\partial z} (\frac{\gamma}{\lambda}, \frac{2 \tilde{\phi}}{\alpha}, \frac{2\lambda x}{\alpha})\,\frac{2\lambda}{\alpha} + \frac{\alpha x}{2} \,\frac{\partial^2 M}{\partial z^2}(\frac{\gamma}{\lambda} , \frac{2 \tilde{\phi}}{\alpha} , \frac{2\lambda x}{\alpha})\,\frac{4\lambda^2}{\alpha^2} \\
&=& \lambda z \frac{\partial^2 M}{\partial z^2}(\frac{\gamma}{\lambda} , \frac{2 \tilde{\phi}}{\alpha} , \frac{2\lambda x}{\alpha}) + (\frac{2\lambda \tilde{\phi}}{\alpha} - \lambda z) \frac{\partial M}{\partial z}(\frac{\gamma}{\lambda}, \frac{2 \tilde{\phi}}{\alpha}, \frac{2\lambda x}{\alpha})\\
&=& \lambda\, \frac{\gamma}{\lambda} M(\frac{\gamma}{\lambda}, \frac{2 \tilde{\phi}}{\alpha}, \frac{2\lambda x}{\alpha})\,(car\, M\, satisfait (5.1))\\
&=& \gamma\, V(x).
\end{eqnarray*}
So we can apply Theorem \ref{L.S} on $V$, $\tilde{r}$, $\mu(\tilde{r}(t)) = (\phi + \frac{\lambda\beta}{\alpha} - \lambda \tilde{r}(t))$, $\sigma(\tilde{r}(t)) = \sqrt{\alpha \tilde{r}(t)}$ and $\tilde{r}(0) = r_0 + \frac{\beta}{\alpha} $, and therefore, since $T^{(r)}_a = T^{(\tilde{r})}_{a+\frac{\beta}{\alpha}} $, we have
$$
E^Q_{r_0}[e^{-\gamma T^{(r)}_a} 1_{T^{(r)}_a < \infty}]=\frac{V(r_0 + \frac{\beta}{\alpha})}{V(a + \frac{\beta}{\alpha})} = \frac{M(\frac{\gamma}{\lambda} , \frac{2}{\alpha}(\phi + \frac{\lambda \beta}{\alpha}) , \frac{2\lambda (r_0 + \beta/\alpha)}{\alpha})}{M(\frac{\gamma}{\lambda} , \frac{2}{\alpha}(\phi + \frac{\lambda \beta}{\alpha}) , \frac{2\lambda (a + \beta/\alpha)}{\alpha})}.
$$
Letting $\gamma$ go to zero , since $M(0, b, z) = 1$ , we find 
\begin{eqnarray*}
Q (T^{(r)}_a < \infty) &=& E^Q_{r_0}[1_{T^{(r)}_a < \infty}]\\ 
&=& \lim_{\gamma \rightarrow 0^+}
E^Q_{r_0}[e^{-\gamma T^{(r)}_a} 1_{T^{(r)}_a < \infty}]\\ 
&=& \lim_{\gamma \rightarrow 0^+} \frac{M(\frac{\gamma}{\lambda} , \frac{2}{\alpha}(\phi + \frac{\lambda \beta}{\alpha}) , \frac{2\lambda (r_0 + \beta/\alpha)}{\alpha})}{M(\frac{\gamma}{\lambda} , \frac{2}{\alpha}(\phi + \frac{\lambda \beta}{\alpha}) , \frac{2\lambda (a + \beta/\alpha)}{\alpha})}\\ 
&=& \frac{M(0 , \frac{2}{\alpha}(\phi + \frac{\lambda \beta}{\alpha}) , \frac{2\lambda (r_0 + \beta/\alpha)}{\alpha})}{M(0 , \frac{2}{\alpha}(\phi + \frac{\lambda \beta}{\alpha}) , \frac{2\lambda (a + \beta/\alpha)}{\alpha})}\\
&=& 1\,,
\end{eqnarray*}
which gives us $T^{(r)}_a < \infty$, $Q$-almost surely. Then
$$ 
E^Q_{r_0}[e^{-\gamma T^{(r)}_a}] = E^Q_{r_0}[e^{-\gamma T^{(r)}_a} 1_{T^{(r)}_a < \infty}] = \frac{M(\frac{\gamma}{\lambda} , \frac{2}{\alpha}(\phi + \frac{\lambda \beta}{\alpha}) , \frac{2\lambda (r_0 + \beta/\alpha)}{\alpha})}{M(\frac{\gamma}{\lambda} , \frac{2}{\alpha}(\phi + \frac{\lambda \beta}{\alpha}) , \frac{2\lambda (a + \beta/\alpha)}{\alpha})}.
$$
\end{proof}
%%%%%%%%%%%%%%%%%%%
%%%%%%%%%%%%%%%%%%%
\subsection{Option price}

We give in the following proposition an explicit formula to the Laplace transform of a European call option on the maximum of the instantaneous rate .
\begin{proposition}\label{prop3.2}
For all $\tilde{a} > \max(0, \frac{\beta}{\alpha} - \frac{\theta}{\sqrt{\alpha}}(\phi + \frac{\lambda \beta}{\alpha}))$ and all $r_0\leq k$, we have 
$$
U_{(k,r_0)}(\tilde{a})= e^{-\frac{\theta r_0 }{\sqrt{\alpha}}}\,M(\frac{\tilde{\gamma}}{\tilde{\lambda}},(\tilde{\phi}_{0} + \frac{\tilde{\lambda}\beta}{\alpha})\frac{2}{\alpha} , \frac{2\tilde{\lambda}(r_0+\beta/\alpha)}{\alpha})
\int_k^\infty \frac{e^{\frac{\theta v}{\sqrt{\alpha}}} P_{\tilde{a}}(v)}{M(\frac{\tilde{\gamma}}{\tilde{\lambda}},(\tilde{\phi}_{0} + \frac{\tilde{\lambda}\beta}{\alpha})\frac{2}{\alpha} , \frac{2\tilde{\lambda}(v + \beta/\alpha)}{\alpha})}dv\,,
$$
where $M$ is the Kummer's function defined in \ref{Kummer}, and for all $\tilde{a}$, $P_{\tilde{a}}(v):=  \int_0^\infty dTe^{-\tilde{a}T} B_v(0,T)$.
\end{proposition}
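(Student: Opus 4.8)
The plan is to write $U_{(k,r_0)}(\tilde a)=\int_0^\infty e^{-\tilde aT}\,C(\sup_{u\in[0,T]}r_u,0,T,k)\,dT$, decompose the call payoff into level sets of the running maximum, express each level set through the first passage time $T^{(r)}_v=\inf\{t:\,r_t\ge v\}$, and then remove the running discount $\int_0^{\cdot}r_s\,ds$ by an equivalent change of probability that carries \eqref{eaffine} into another affine equation of the same form, reducing everything to Theorem~\ref{Th3.1}.

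First I would use that for $r_0\le k\le v$ one has $\{\sup_{u\in[0,T]}r_u>v\}=\{T^{(r)}_v\le T\}$, so that by Tonelli
\[
C\!\Big(\sup_{u\in[0,T]}r_u,0,T,k\Big)=\int_k^\infty E^Q_{r_0}\!\big[e^{-\int_0^Tr_s\,ds}\,1_{\{T^{(r)}_v\le T\}}\big]\,dv .
\]
Conditioning on $\mathcal{F}_{T^{(r)}_v}$, using the strong Markov property of $r$ and $r_{T^{(r)}_v}=v$ on $\{T^{(r)}_v\le T\}$, the inner expectation equals $E^Q_{r_0}\!\big[1_{\{T^{(r)}_v\le T\}}\,e^{-\int_0^{T^{(r)}_v}r_s\,ds}\,B_v(0,T-T^{(r)}_v)\big]$, where $B_v(0,s)=E^Q_v[e^{-\int_0^sr_u\,du}]$ is the discount bond price computed from the initial rate $v$. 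Multiplying by $e^{-\tilde aT}$, integrating in $T$ over $(0,\infty)$ and interchanging all (nonnegative) integrals by Tonelli, the substitution $T=T^{(r)}_v+u$ gives $\int_0^\infty e^{-\tilde aT}1_{\{T^{(r)}_v\le T\}}B_v(0,T-T^{(r)}_v)\,dT=e^{-\tilde aT^{(r)}_v}P_{\tilde a}(v)$, whence
\[
U_{(k,r_0)}(\tilde a)=\int_k^\infty P_{\tilde a}(v)\;E^Q_{r_0}\!\Big[\exp\!\Big(-\!\int_0^{T^{(r)}_v}(\tilde a+r_s)\,ds\Big)1_{\{T^{(r)}_v<\infty\}}\Big]\,dv .
\]

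Next I would compute $\Psi(v):=E^Q_{r_0}[\exp(-\int_0^{T^{(r)}_v}(\tilde a+r_s)\,ds)1_{\{T^{(r)}_v<\infty\}}]$. Let $\theta>0$ be the positive root of $\tfrac{\theta^2}{2}+\tfrac{\lambda\theta}{\sqrt\alpha}-1=0$, so that $\kappa:=-\theta/\sqrt\alpha<0$ solves $\tfrac{\alpha}{2}\kappa^2-\lambda\kappa-1=0$ and $e^{\kappa y}$ is bounded on $[0,\infty)$; then $\mathcal{A}(e^{\kappa y})=c(y)e^{\kappa y}$ with $c(y)=y+\tfrac{\beta\kappa^2}{2}+\phi\kappa$, and with $\tilde\gamma:=\tilde a-\tfrac{\beta\theta^2}{2\alpha}+\tfrac{\phi\theta}{\sqrt\alpha}$ one has $\tilde a+y-c(y)\equiv\tilde\gamma$. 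By It\^o's formula $M_t:=e^{\theta(r_0-r_t)/\sqrt\alpha}\exp(-\int_0^tc(r_s)\,ds)$ is a positive local martingale with $dM_t=-\tfrac{\theta}{\sqrt\alpha}M_t\sqrt{\alpha r_t+\beta}\,dw_t$; I would show it is a true martingale (this is the Novikov-type verification carried out in the appendix) and define $\widehat{Q}$ by $d\widehat{Q}/dQ|_{\mathcal{F}_t}=M_t$. By Girsanov's theorem, under $\widehat{Q}$ the process $r$ solves $dr_t=(\tilde\phi_0-\tilde\lambda r_t)\,dt+\sqrt{\alpha r_t+\beta}\,d\hat w_t$, an equation of the form \eqref{eaffine} with $\tilde\lambda=\sqrt{\lambda^2+2\alpha}$, $\tilde\phi_0=\phi-\theta\beta/\sqrt\alpha$ and unchanged $\alpha,\beta$, while $T^{(r)}_v$ is pathwise unchanged. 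Stopping the density at $T^{(r)}_v$ (first at $T^{(r)}_v\wedge n$, then letting $n\to\infty$ by monotone convergence) yields $\Psi(v)=e^{\theta(v-r_0)/\sqrt\alpha}\,E^{\widehat{Q}}_{r_0}[e^{-\tilde\gamma T^{(r)}_v}1_{\{T^{(r)}_v<\infty\}}]$, and Theorem~\ref{Th3.1} applied to the $\widehat{Q}$-dynamics with discount rate $\tilde\gamma$ (i.e.\ with $(\phi,\lambda,\gamma)$ replaced by $(\tilde\phi_0,\tilde\lambda,\tilde\gamma)$, which also shows $T^{(r)}_v<\infty$ $\widehat{Q}$-a.s.) evaluates the $\widehat{Q}$-expectation as
\[
\frac{M\big(\tfrac{\tilde\gamma}{\tilde\lambda},(\tilde\phi_0+\tfrac{\tilde\lambda\beta}{\alpha})\tfrac{2}{\alpha},\tfrac{2\tilde\lambda(r_0+\beta/\alpha)}{\alpha}\big)}{M\big(\tfrac{\tilde\gamma}{\tilde\lambda},(\tilde\phi_0+\tfrac{\tilde\lambda\beta}{\alpha})\tfrac{2}{\alpha},\tfrac{2\tilde\lambda(v+\beta/\alpha)}{\alpha}\big)} .
\]
Substituting into the formula for $U_{(k,r_0)}(\tilde a)$ and pulling the $v$-independent factors $e^{-\theta r_0/\sqrt\alpha}$ and the numerator Kummer function out of the integral produces exactly the asserted identity. (As a check, $\tilde\phi_0+\tfrac{\tilde\lambda\beta}{\alpha}=\phi+\tfrac{\lambda\beta}{\alpha}$, so the middle argument of $M$ is unchanged from Theorem~\ref{Th3.1} and the condition $-b\notin\mathbf N$ of Definition~\ref{Kummer} is satisfied.)

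The hard part will be the second step: proving that the exponential local martingale $M_t$ is a genuine martingale, so that $\widehat{Q}$ is a bona fide probability and the change of measure together with the optional stopping at the possibly unbounded time $T^{(r)}_v$ is justified. This is also where the restriction on $\tilde a$ originates: from $\tfrac{\theta^2}{2}=1-\tfrac{\lambda\theta}{\sqrt\alpha}$ one obtains $\tfrac{\beta\theta^2}{2\alpha}-\tfrac{\phi\theta}{\sqrt\alpha}=\tfrac{\beta}{\alpha}-\tfrac{\theta}{\sqrt\alpha}(\phi+\tfrac{\lambda\beta}{\alpha})$, so $\tilde a>\max(0,\tfrac{\beta}{\alpha}-\tfrac{\theta}{\sqrt\alpha}(\phi+\tfrac{\lambda\beta}{\alpha}))$ is exactly ``$\tilde a>0$ and $\tilde\gamma>0$'', which ensures convergence of $P_{\tilde a}$ and that the Kummer functions appearing are positive and bounded on $[0,v+\beta/\alpha]$ as required by Theorem~\ref{Th3.1}. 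Finally I would check that the outer integral converges at $+\infty$: since $0<B_v(0,s)\le1$ we have $P_{\tilde a}(v)\le1/\tilde a$, while $M(c,b,z)$ grows like $e^{z}$ as $z\to+\infty$, so the integrand decays like $\exp((\tfrac{\theta}{\sqrt\alpha}-\tfrac{2\tilde\lambda}{\alpha})v)$ with $\tfrac{\theta}{\sqrt\alpha}<\tfrac{2\tilde\lambda}{\alpha}$; the remaining interchanges of $E^Q_{r_0}$ with the $dv$- and $e^{-\tilde aT}\,dT$-integrals are all legitimate by Tonelli because the integrands are nonnegative.
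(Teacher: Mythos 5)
Your proposal is correct and follows essentially the same route as the paper: the layer-cake decomposition of the payoff, the strong Markov factorization into $P_{\tilde a}(v)$ times the first-passage functional, the Girsanov change of measure with the same $\theta=\frac{-\lambda+\sqrt{\lambda^2+2\alpha}}{\sqrt{\alpha}}$ (your density $M_t$ is algebraically identical to the paper's $dQ^*/dQ$, just written in terms of $r$ rather than $\tilde r=r+\beta/\alpha$), and then Theorem~\ref{Th3.1} applied to the new affine dynamics. The only differences are expository: you justify the optional stopping at $T^{(r)}_v$, the convergence of the outer integral, and the origin of the condition on $\tilde a$ more explicitly than the paper does.
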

\begin{proof}
Note that:
$$
(\sup_{u\in[0,T]}r_u -k)_+ = \int_k^\infty 1_{\sup_{[0,T]} r_u > v}\, dv\,,
$$
and
$
T_v = inf\{t; r_t > v \} = T^{(r)}_v.
$

Let $U_{(k,r_0)}$ be the Laplace transform of $C(\sup_{u\in[0,T]} r_u , 0 , T , k)$ considered as function of $T$, then for all $\tilde{a} >\max(0, \frac{\beta}{\alpha} - \frac{\theta}{\sqrt{\alpha}}(\phi + \frac{\lambda \beta}{\alpha}))$,
\begin{eqnarray*}
U_{(k,r_0)}(\tilde{a})
&=&\int_0^\infty e^{-\tilde{a} T} C(\sup_{u\in[0,T]} r_u , 0 , T , k)dT \\
&=& E^Q_{r_0}[\int_k^\infty dv\,\int_0^\infty dT e^{-\tilde{a}T-\int_0^T r_s ds} 1_{\sup_{[0,T]} r_u > v}]\\
&=& E^Q_{r_0}[\int_k^\infty dv\,\int_{T_v}^\infty dT e^{-\tilde{a}T-\int_0^T r_s ds}]\\
&=& E^Q_{r_0}[\int_k^\infty dv\,\int_{T_v}^\infty dT e^{-\tilde{a}(T-T_v)-\tilde{a} T_v - \int_0^{T_v} r_s ds - \int_{T_v}^T r_s ds}]\\
&=& \int_k^\infty dv E^Q_{r_0}[e^{-\tilde{a} T_v - \int_0^{T_v} r_s ds} E^Q_{r_0}[\int_{T_v}^\infty dT e^{-\tilde{a}(T-T_v)-\int_{T_v}^T r_s ds} / T_v]]\\
&=& \int_k^\infty dv E^Q_{r_0}[e^{- \tilde{a}T_v - \int_0^{T_v} r_s ds}] E^Q_v[\int_0^\infty dTe^{-\tilde{a} T-\int_0^T r_s ds}].
\end{eqnarray*}
The last equality is due to the strong Markov property and the time homogeneity of the instantaneous interest rate process.

First, we will determine : $E^Q_{r_0}[e^{-\tilde{a} T_v - \int_0^{T_v} r_s ds}]$. By the change of variable: $\tilde{r}_t = r_t + \frac{\beta}{\alpha}$, we have 
$$
E^Q_{r_0}[e^{-\tilde{a} T_v - \int_0^{T_v} r_s ds}] 
= 
E^Q_{(r_0 + \frac{\beta}{\alpha})}[e^{-(\tilde{a}-\frac{\beta}{\alpha}) T_v - \int_0^{T_v} \tilde{r}_s ds}]\,.
$$
Let us now do the change of measure
\begin{eqnarray*}
\frac{dQ^*}{dQ} &=& \exp\left(-\int_0^t \theta \sqrt{\tilde{r}(s)}\, d w(s) - \frac{1}{2}\int_0^t\theta^2 \tilde{r}_s ds
\right)\\
&=&\exp\left(-\frac{1}{\sqrt{\alpha}}\int_0^t \theta 
[d\tilde{r}_s - (\phi + \frac{\lambda \beta}{\alpha}- \lambda \tilde{r}_s)\,ds] 
-\frac{1}{2}\int_0^t\theta^2 \tilde{r}_s ds
\right)\,;
\end{eqnarray*}
%%%%%%%%%%%%%%%%
this is due of equation \eqref{eq.3.1} for $\tilde{r}(t)$. Here, Novikov's condition : $E^Q_{\tilde{r}_0}(e^{\frac{1}{2}\int_0^T\theta^2 \tilde{r}_s ds}) < \infty $ is satisfied and it is the subject of Lemma \ref{Le} (see Appendix on Novikov's condition).

By Girsanov'Theorem, $Q^*$ is a probability, and therefore 
\begin{eqnarray*}
E^Q_{r_0}[e^{-\tilde{a} T_v - \int_0^{T_v} r_s ds}] &=& E^Q_{(r_0 + \frac{\beta}{\alpha})}[e^{-(\tilde{a}-\frac{\beta}{\alpha}) T_v - \int_0^{T_v} \tilde{r}_s ds}]
\\
&=& E^{Q*}_{(r_0 + \frac{\beta}{\alpha})}
\left[
e^{-(\tilde{a}-\frac{\beta}{\alpha}) T_v - \int_0^{T_v} \tilde{r}_s ds}\,e^{ \left(\frac{1}{\sqrt{\alpha}}\int_0^{T_v} \theta 
[d\tilde{r}_s - (\phi + \frac{\lambda \beta}{\alpha}- \lambda \tilde{r}_s)\,ds] 
+ \frac{1}{2}\int_0^{T_v}\theta^2 \tilde{r}_s ds
\right)
}
\right]
\\
&=& E^{Q*}_{(r_0 + \frac{\beta}{\alpha})}
\left[
e^{-(\tilde{a}-\frac{\beta}{\alpha}) T_v - \int_0^{T_v} \tilde{r}_s ds} 
\,e^{\frac{\theta}{\sqrt{\alpha}}(v - {r}_0) - \frac{\theta}{\sqrt{\alpha}}((\phi + \frac{\lambda \beta}{\alpha})T_v} \,e^{\int_0^{T_v}(\frac{\lambda\theta}{\sqrt{\alpha}} + \frac{\theta^2}{2})\tilde{r}_s ds}
\right].
\end{eqnarray*}
We choose $\theta$ as : $\frac{\lambda\theta}{\sqrt{\alpha}} + \frac{\theta^2}{2} = 1$ ($\alpha > 0$ and $\lambda \in \mathbf R^* $) , this gives us :

$\sqrt{\alpha}\theta^2 + 2\lambda\theta - 2\sqrt{\alpha} = 0 $, and
$$
\theta = \frac{-\lambda \pm \sqrt{\lambda^2 + 2\alpha}}{\sqrt{\alpha}}
$$
In the following, we choose $\theta = \frac{-\lambda + \sqrt{\lambda^2 + 2\alpha}}{\sqrt{\alpha}} $, whence $\theta > 0$ .
\begin{eqnarray*}
E^Q_{r_0}[e^{-\tilde{a} T_v - \int_0^{T_v} r_s ds}] = e^{\frac{\theta}{\sqrt{\alpha}}(v -{r}_0)}\,E^{Q*}_{(r_0 + \frac{\beta}{\alpha})}
\left[
e^{-\left(\tilde{a}-\frac{\beta}{\alpha} + \frac{\theta}{\sqrt{\alpha}}(\phi + \frac{\lambda \beta}{\alpha})\right)T_v}
\right]\,.
\end{eqnarray*} 
Let $\tilde{\gamma} = \tilde{a}-\frac{\beta}{\alpha} + \frac{\theta}{\sqrt{\alpha}}(\phi + \frac{\lambda \beta}{\alpha})$, then :
\begin{eqnarray*}
E^Q_{r_0}[e^{-\tilde{a} T_v - \int_0^{T_v} r_s ds}] = e^{\frac{\theta}{\sqrt{\alpha}}(v - {r}_0)}\,E^{Q*}_{(r_0 + \frac{\beta}{\alpha})}(e^{-\tilde{\gamma}T_v}).
\end{eqnarray*}
Under the probability $Q^*$, the instantaneous rate $\tilde{r}(t)$ satisfies
\begin{equation}\label{eq.3.2}
d \tilde{r}(t) = (\phi + \frac{\lambda \beta}{\alpha}-(\lambda + \theta\sqrt{\alpha})\tilde{r}(t))\,dt + \sqrt{\alpha \tilde{r}(t)} \, d w^*(t) ,
\end{equation}
where $w^*(t)$ is a $(Q^*, \F_t)$ Brownian motion, with
$$ d w^*(t) = d w(t) + \theta\sqrt{\tilde{r}(t)}\,dt\,.
$$
By applying Theorem \ref{L.S}, we get that
$$
\tilde{\A}_{*} = (\phi + \frac{\lambda \beta}{\alpha}-(\lambda + \theta\sqrt{\alpha})\tilde{r})\frac{d}{d\tilde{r}} + \frac{\alpha \tilde{r}}{2}\frac{d^2}{d\tilde{r}^2}\,,
$$
is the infinitesimal generator associed on \eqref{eq.3.2}. Let us denote
$$
\tilde{\lambda} = \lambda + \theta\sqrt{\alpha}
$$
and
$$
\tilde{\phi}_{0} = \phi - \frac{\beta\theta}{\sqrt{\alpha}}\,.
$$
Then
$$
\tilde{\A}_{*} = (\tilde{\phi}_{0} + \frac{\tilde{\lambda}\beta}{\alpha} - \tilde{\lambda}\tilde{r})\,\frac{d}{d\tilde{r}} + \frac{\alpha \tilde{r}}{2}\frac{d^2}{d\tilde{r}^2}\,.
$$
Here, we take $\lambda > 0$ and $\phi + \frac{\lambda \beta}{\alpha} > 0$, therefore, by Theorem \ref{L.S} applied to $\tilde{r}_t$ under $Q^*$,
$$
E^{Q*}_{(r_0 + \frac{\beta}{\alpha})}(e^{-\tilde{\gamma} T_v}) = \frac{M(\frac{\tilde{\gamma}}{\tilde{\lambda}},(\tilde{\phi}_{0} + \frac{\tilde{\lambda}\beta}{\alpha})\frac{2}{\alpha} , \frac{2\tilde{\lambda}(r_0+\beta/\alpha)}{\alpha})}{M(\frac{\tilde{\gamma}}{\tilde{\lambda}},(\tilde{\phi}_{0} + \frac{\tilde{\lambda}\beta}{\alpha})\frac{2}{\alpha} , \frac{2\tilde{\lambda}(v + \beta/\alpha)}{\alpha})}\,.
$$
Whence,
$$
E^Q_{r_0}[e^{-\tilde{a} T_v - \int_0^{T_v} r_s ds}] = e^{\frac{\theta}{\sqrt{\alpha}}(v - {r}_0)}\, \frac{M(\frac{\tilde{\gamma}}{\tilde{\lambda}},(\tilde{\phi}_{0} + \frac{\tilde{\lambda}\beta}{\alpha})\frac{2}{\alpha} , \frac{2\tilde{\lambda}(r_0+\beta/\alpha)}{\alpha})}{M(\frac{\tilde{\gamma}}{\tilde{\lambda}},(\tilde{\phi}_{0} + \frac{\tilde{\lambda}\beta}{\alpha})\frac{2}{\alpha} , \frac{2\tilde{\lambda} (v + \beta/\alpha)}{\alpha})}\,.
$$
Also,
$$
E^Q_v[\int_0^\infty dTe^{-\tilde{a} T-\int_0^T r_s ds}] = \int_0^\infty dT E^Q_v[e^{-\tilde{a} T-\int_0^T r_s ds}]
= \int_0^\infty dTe^{-\tilde{a}T} B_v(0,T),
$$
where $B_v(0,T)$ denotes the price of a discount bond corresponding to the interest rate process starting from $v$.
\end{proof}

The following result allows us to study the variation of the option price according to the strike $k$.
\begin{corollary}\label{corollary3.3}
We have
\begin{eqnarray*}
\frac{\partial}{\partial k} C(\sup_{u\in[0,T]} r_u , 0 , T , k) = -e^{\frac{\theta }{\sqrt{\alpha}}(k - r_0)}\,\frac{M(\frac{\tilde{\gamma}}{\tilde{\lambda}},(\tilde{\phi}_{0} + \frac{\tilde{\lambda}\beta}{\alpha})\frac{2}{\alpha} , \frac{2\tilde{\lambda}(r_0+\beta/\alpha)}{\alpha})}{{M(\frac{\tilde{\gamma}}{\tilde{\lambda}},(\tilde{\phi}_{0} + \frac{\tilde{\lambda}\beta}{\alpha})\frac{2}{\alpha} , \frac{2\tilde{\lambda}(k + \beta/\alpha)}{\alpha})}}\,B_k(0,T).
\end{eqnarray*}
\end{corollary}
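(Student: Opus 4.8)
The plan is to obtain Corollary~\ref{corollary3.3} by differentiating the closed form of Proposition~\ref{prop3.2} with respect to the strike $k$ and then undoing the Laplace transform in the maturity variable $T$. As in the proof of Proposition~\ref{prop3.2}, write
\[
U_{(k,r_0)}(\tilde a)=\int_0^\infty e^{-\tilde a T}\,C(\sup_{u\in[0,T]}r_u,0,T,k)\,dT = e^{-\frac{\theta r_0}{\sqrt\alpha}}\,M\big(\tfrac{\tilde\gamma}{\tilde\lambda},(\tilde\phi_0+\tfrac{\tilde\lambda\beta}{\alpha})\tfrac2\alpha,\tfrac{2\tilde\lambda(r_0+\beta/\alpha)}{\alpha}\big)\int_k^\infty\frac{e^{\frac{\theta v}{\sqrt\alpha}}P_{\tilde a}(v)}{M\big(\tfrac{\tilde\gamma}{\tilde\lambda},(\tilde\phi_0+\tfrac{\tilde\lambda\beta}{\alpha})\tfrac2\alpha,\tfrac{2\tilde\lambda(v+\beta/\alpha)}{\alpha}\big)}\,dv .
\]
On the right-hand side the strike $k$ appears only as the lower endpoint of the $dv$-integral, so the key computation is a single application of the fundamental theorem of calculus.

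First I would check that the integrand in $v$ is continuous on $[0,\infty)$: for $\tilde a$ in the admissible range one has $c=\tilde\gamma/\tilde\lambda>0$, $b=(\tilde\phi_0+\tilde\lambda\beta/\alpha)\tfrac2\alpha>0$ and $z>0$, hence every coefficient of the series defining $M(c,b,z)$ is positive and $M\ge 1>0$ there, so the denominator never vanishes, while $v\mapsto P_{\tilde a}(v)$ is finite. Then
\[
\frac{\partial}{\partial k}U_{(k,r_0)}(\tilde a) = -\,e^{-\frac{\theta r_0}{\sqrt\alpha}}\,M\big(\tfrac{\tilde\gamma}{\tilde\lambda},(\tilde\phi_0+\tfrac{\tilde\lambda\beta}{\alpha})\tfrac2\alpha,\tfrac{2\tilde\lambda(r_0+\beta/\alpha)}{\alpha}\big)\,\frac{e^{\frac{\theta k}{\sqrt\alpha}}P_{\tilde a}(k)}{M\big(\tfrac{\tilde\gamma}{\tilde\lambda},(\tilde\phi_0+\tfrac{\tilde\lambda\beta}{\alpha})\tfrac2\alpha,\tfrac{2\tilde\lambda(k+\beta/\alpha)}{\alpha}\big)} = -\,e^{\frac{\theta(k-r_0)}{\sqrt\alpha}}\,\frac{M\big(\tfrac{\tilde\gamma}{\tilde\lambda},\ldots,\tfrac{2\tilde\lambda(r_0+\beta/\alpha)}{\alpha}\big)}{M\big(\tfrac{\tilde\gamma}{\tilde\lambda},\ldots,\tfrac{2\tilde\lambda(k+\beta/\alpha)}{\alpha}\big)}\,P_{\tilde a}(k).
\]

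On the other hand $k\mapsto(\sup_{u\in[0,T]}r_u-k)_+$ is convex and nonincreasing with $\partial_k(\sup-k)_+=-1_{\sup>k}$ a.e., so $\partial_k C(\sup_{u\in[0,T]}r_u,0,T,k)=-E^Q_{r_0}[e^{-\int_0^Tr_s\,ds}1_{\sup_{[0,T]}r_u>k}]$ exists and satisfies $|\partial_k C|\le B_{r_0}(0,T)$, whose Laplace transform $\int_0^\infty e^{-\tilde a T}B_{r_0}(0,T)\,dT=P_{\tilde a}(r_0)$ is finite; this domination licenses differentiating under the $dT$-integral, giving $\frac{\partial}{\partial k}U_{(k,r_0)}(\tilde a)=\int_0^\infty e^{-\tilde a T}\,\frac{\partial}{\partial k}C(\sup_{u\in[0,T]}r_u,0,T,k)\,dT$. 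Substituting $P_{\tilde a}(k)=\int_0^\infty e^{-\tilde a T}B_k(0,T)\,dT$ into the displayed formula for $\partial_k U_{(k,r_0)}(\tilde a)$ exhibits both expressions as Laplace transforms in $T$, equal on a half-line of values of $\tilde a$; by injectivity of the Laplace transform the integrands coincide for a.e. $T$, and hence for every $T$ by continuity in $T$ of both sides, which is exactly the asserted identity. The one genuinely delicate point is this interchange of $\partial/\partial k$ with $\int_0^\infty dT$, secured by the bound $|\partial_k C|\le B_{r_0}(0,T)$; the presence of $\tilde\gamma,\tilde\lambda,\tilde\phi_0$ (which depend on $\tilde a$) on the right of Corollary~\ref{corollary3.3} is only cosmetic, the statement being read through its Laplace transform in $T$, which then matches Proposition~\ref{prop3.2} term by term. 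Everything else is the fundamental theorem of calculus.
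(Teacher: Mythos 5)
Your proposal is correct and follows exactly the paper's own route: differentiate the formula of Proposition \ref{prop3.2} in $k$ via the fundamental theorem of calculus (since $k$ enters only as the lower limit of the $dv$-integral), rewrite $P_{\tilde a}(k)$ as $\int_0^\infty e^{-\tilde a T}B_k(0,T)\,dT$, and conclude by injectivity of the Laplace transform. The only difference is that you supply the justifications (positivity of the Kummer denominator and the domination allowing $\partial_k$ to pass under the $dT$-integral) that the paper takes for granted.
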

\begin{proof}
Indeed,
$
\frac{\partial}{\partial k} U_{(k,r_0)}(\tilde{a}) = \int_0^\infty e^{-\tilde{a} T} \frac{\partial}{\partial k} C(\sup_{u\in[0,T]} r_u , 0 , T , k)\,dT
$
and
\begin{eqnarray*}
\frac{\partial}{\partial k} U_{(k,r_0)}(\tilde{a}) &=& -e^{-\frac{\theta r_0}{\sqrt{\alpha}}}\,\frac{M(\frac{\tilde{\gamma}}{\tilde{\lambda}},(\tilde{\phi}_{0} + \frac{\tilde{\lambda}\beta}{\alpha})\frac{2}{\alpha} , \frac{2\tilde{\lambda}(r_0+\beta/\alpha)}{\alpha})}{{M(\frac{\tilde{\gamma}}{\tilde{\lambda}},(\tilde{\phi}_{0} + \frac{\tilde{\lambda}\beta}{\alpha})\frac{2}{\alpha} , \frac{2\tilde{\lambda}(k + \beta/\alpha)}{\alpha})}}\, e^{\frac{\theta k}{\sqrt{\alpha}}} P_{\tilde{a}}(k)\\
&=&-\int_0^\infty e^{-\tilde{a}T} e^{\frac{\theta }{\sqrt{\alpha}}(k - r_0)}\,\frac{M(\frac{\tilde{\gamma}}{\tilde{\lambda}},(\tilde{\phi}_{0} + \frac{\tilde{\lambda}\beta}{\alpha})\frac{2}{\alpha} , \frac{2\tilde{\lambda}(r_0+\beta/\alpha)}{\alpha})}{{M(\frac{\tilde{\gamma}}{\tilde{\lambda}},(\tilde{\phi}_{0} + \frac{\tilde{\lambda}\beta}{\alpha})\frac{2}{\alpha} , \frac{2\tilde{\lambda}(k + \beta/\alpha)}{\alpha})}}\,B_k(0,T)\,dT\,.
\end{eqnarray*}
Whence Corollary \ref{corollary3.3} by the injectivity of the Laplace transform.
\end{proof}
%%%%%%%%%%%%%%%%%%%%
\begin{corollary}\label{corollary3.4}
We have also
\begin{eqnarray*}
\frac{\partial}{\partial K} C(\sup_{u\in[0,T]} Y(u, u + \tau) , 0 , T , K) = \frac{\partial}{\partial k} C(\sup_{u\in[0,T]} r_u , 0 , T , k)\,.
\end{eqnarray*}
\end{corollary}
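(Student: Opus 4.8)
The plan is to differentiate the identity already recorded at the start of Section~3, namely
\[
C(\sup_{u\in[0,T]} Y(u, u+\tau),0,T,K) = \frac{A(\tau)}{\tau}\,C(\sup_{u\in[0,T]} r_u,0,T,k),
\qquad k = k(K) := \frac{\tau K - b(\tau)}{A(\tau)},
\]
which comes simply from rewriting the payoff $(\sup_{u}Y(u,u+\tau)-K)_+ = \tfrac{A(\tau)}{\tau}(\sup_u r_u - k)_+$ using the affine form of the yield. Here $\tau$, $A(\tau)$ and $b(\tau)$ are fixed constants, so $K\mapsto k(K)$ is an affine bijection with $\dfrac{dk}{dK} = \dfrac{\tau}{A(\tau)}$.

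First I would note that the map $k\mapsto C(\sup_{u\in[0,T]} r_u,0,T,k)$ is differentiable: this is exactly the content of Corollary~\ref{corollary3.3}, which even gives the derivative in closed form. Hence the right-hand side of the displayed identity is a differentiable function of $K$ through the composition with $k(K)$.

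Then I would apply the chain rule to the identity:
\[
\frac{\partial}{\partial K} C(\sup_{u\in[0,T]} Y(u, u + \tau) , 0 , T , K)
= \frac{A(\tau)}{\tau}\,\frac{\partial}{\partial k} C(\sup_{u\in[0,T]} r_u , 0 , T , k)\cdot\frac{dk}{dK}
= \frac{A(\tau)}{\tau}\cdot\frac{\tau}{A(\tau)}\,\frac{\partial}{\partial k} C(\sup_{u\in[0,T]} r_u , 0 , T , k),
\]
and the prefactor $A(\tau)/\tau$ cancels against the Jacobian $\tau/A(\tau)$, leaving precisely $\dfrac{\partial}{\partial k} C(\sup_{u\in[0,T]} r_u , 0 , T , k)$.

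There is essentially no obstacle here; the only point worth a line is the regularity used in the chain rule, which is supplied by Corollary~\ref{corollary3.3} (and, if one wishes, by the explicit Laplace-transform formula of Proposition~\ref{prop3.2} together with injectivity of the Laplace transform, exactly as in the proof of Corollary~\ref{corollary3.3}). One may also, if desired, remark that combining this with Corollary~\ref{corollary3.3} yields the explicit expression
\[
\frac{\partial}{\partial K} C(\sup_{u\in[0,T]} Y(u, u + \tau) , 0 , T , K)
= -e^{\frac{\theta }{\sqrt{\alpha}}(k - r_0)}\,\frac{M(\frac{\tilde{\gamma}}{\tilde{\lambda}},(\tilde{\phi}_{0} + \frac{\tilde{\lambda}\beta}{\alpha})\frac{2}{\alpha} , \frac{2\tilde{\lambda}(r_0+\beta/\alpha)}{\alpha})}{M(\frac{\tilde{\gamma}}{\tilde{\lambda}},(\tilde{\phi}_{0} + \frac{\tilde{\lambda}\beta}{\alpha})\frac{2}{\alpha} , \frac{2\tilde{\lambda}(k + \beta/\alpha)}{\alpha})}\,B_k(0,T),
\]
with $k = (\tau K - b(\tau))/A(\tau)$.
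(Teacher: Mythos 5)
Your proof is correct and follows essentially the same route as the paper: differentiate the affine relation $C(\sup_u Y,0,T,K)=\frac{A(\tau)}{\tau}C(\sup_u r_u,0,T,k)$ by the chain rule, with $\frac{dk}{dK}=\frac{\tau}{A(\tau)}$ cancelling the prefactor $\frac{A(\tau)}{\tau}$. Your added remarks on regularity (via Corollary~\ref{corollary3.3}) and the explicit combined formula (which is the paper's Corollary~\ref{corollary3.5}) are consistent with the text.
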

%%%%%%%%%%%%%%%%%%%%%
\begin{proof}
Indeed,
\begin{eqnarray*}
\frac{\partial}{\partial K} C(\sup_{u\in[0,T]} Y(u, u + \tau) , 0 , T , K)
&=&\frac{A(\tau)}{\tau}\,\frac{\partial}{\partial K}C(\sup_{u\in[0,T]} r_u,0,T,k)\\
&=&\frac{A(\tau)}{\tau}\,\frac{\partial}{\partial k} C(\sup_{u\in[0,T]} r_u , 0 , T , k)\,\frac{\partial k}{\partial K}\\
&=& \frac{A(\tau)}{\tau}\,\frac{\partial}{\partial k} C(\sup_{u\in[0,T]} r_u , 0 , T , k)\,\frac{\tau}{A(\tau)}\\
&=&\frac{\partial}{\partial k} C(\sup_{u\in[0,T]} r_u , 0 , T , k)\,.
\end{eqnarray*}
Whence the corollary \ref{corollary3.4}.
\end{proof}

Finally, we have an explicit formula of the option price derivative on the maximum of the yield with respect to the strike $K$: 
\begin{corollary}\label{corollary3.5}
\begin{multline*}
\frac{\partial}{\partial K} C(u\in\sup_{[0,T]} Y(u, u + \tau) , 0 , T , K) =\\ -e^{\frac{\theta}{\sqrt{\alpha}}( \frac{\tau K -b(\tau)}{A(\tau)} - r_0)}\,\frac{M(\frac{\tilde{\gamma}}{\tilde{\lambda}},(\tilde{\phi}_{0} + \frac{\tilde{\lambda}\beta}{\alpha})\frac{2}{\alpha} , \frac{2\tilde{\lambda}(r_0+\beta/\alpha)}{\alpha})}{{M(\frac{\tilde{\gamma}}{\tilde{\lambda}},(\tilde{\phi}_{0} + \frac{\tilde{\lambda}\beta}{\alpha})\frac{2}{\alpha} , \frac{2\tilde{\lambda}(\frac{\tau K -b(\tau)}{A(\tau)} + \beta/\alpha)}{\alpha})}}\,B_{(\frac{\tau K -b(\tau)}{A(\tau)})}(0,T)\,.
\end{multline*}
\end{corollary}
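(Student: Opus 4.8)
\textbf{Proof proposal for Corollary \ref{corollary3.5}.}

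The plan is to simply chain together the two preceding corollaries. First I would invoke Corollary \ref{corollary3.4}, which identifies
$$
\frac{\partial}{\partial K} C(\sup_{u\in[0,T]} Y(u, u + \tau) , 0 , T , K) = \frac{\partial}{\partial k} C(\sup_{u\in[0,T]} r_u , 0 , T , k),
$$
so that the problem reduces to writing the right-hand side explicitly. Then I would apply Corollary \ref{corollary3.3}, which already gives
$$
\frac{\partial}{\partial k} C(\sup_{u\in[0,T]} r_u , 0 , T , k) = -e^{\frac{\theta }{\sqrt{\alpha}}(k - r_0)}\,\frac{M(\frac{\tilde{\gamma}}{\tilde{\lambda}},(\tilde{\phi}_{0} + \frac{\tilde{\lambda}\beta}{\alpha})\frac{2}{\alpha} , \frac{2\tilde{\lambda}(r_0+\beta/\alpha)}{\alpha})}{{M(\frac{\tilde{\gamma}}{\tilde{\lambda}},(\tilde{\phi}_{0} + \frac{\tilde{\lambda}\beta}{\alpha})\frac{2}{\alpha} , \frac{2\tilde{\lambda}(k + \beta/\alpha)}{\alpha})}}\,B_k(0,T).
$$

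The only remaining step is the substitution of the change of strike variable. Recall from Section 3 that the strike $k$ for the call on the instantaneous rate and the strike $K$ for the call on the yield are related by $k = \frac{\tau K - b(\tau)}{A(\tau)}$. Plugging this value of $k$ into the displayed formula of Corollary \ref{corollary3.3} — in the exponential prefactor $e^{\frac{\theta}{\sqrt{\alpha}}(k-r_0)}$, in the argument $\frac{2\tilde\lambda(k+\beta/\alpha)}{\alpha}$ of the Kummer function in the denominator, and in the bond price $B_k(0,T)$ — yields exactly the claimed expression. This substitution is legitimate because the identity of Corollary \ref{corollary3.3} holds for every admissible strike, and $\frac{\tau K - b(\tau)}{A(\tau)} \le r_0$ is equivalent to the hypothesis on $K$ under which Corollary \ref{corollary3.4} and the formula of Proposition \ref{prop3.2} were established (the constants $\tilde\gamma$, $\tilde\lambda$, $\tilde\phi_0$, $\theta$ being exactly those fixed in the proof of Proposition \ref{prop3.2}).

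There is essentially no obstacle here: the corollary is a purely bookkeeping consequence of Corollaries \ref{corollary3.3} and \ref{corollary3.4}. The one point deserving a line of care is to make sure the regularity needed to differentiate under the Laplace transform (used already in Corollary \ref{corollary3.3}) and the admissibility of the strike range carry over verbatim after the affine reparametrisation $k \mapsto K$; both follow at once since $K \mapsto k$ is affine with nonzero slope $\tau/A(\tau)$, so $\partial k/\partial K = \tau/A(\tau)$ is constant and the chain rule introduces no new singularity.

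\begin{proof}
By Corollary \ref{corollary3.4},
$$
\frac{\partial}{\partial K} C(\sup_{u\in[0,T]} Y(u, u + \tau) , 0 , T , K) = \frac{\partial}{\partial k} C(\sup_{u\in[0,T]} r_u , 0 , T , k),
$$
where $k = \frac{\tau K - b(\tau)}{A(\tau)}$. Substituting this value of $k$ into the formula of Corollary \ref{corollary3.3},
$$
\frac{\partial}{\partial k} C(\sup_{u\in[0,T]} r_u , 0 , T , k) = -e^{\frac{\theta }{\sqrt{\alpha}}(k - r_0)}\,\frac{M(\frac{\tilde{\gamma}}{\tilde{\lambda}},(\tilde{\phi}_{0} + \frac{\tilde{\lambda}\beta}{\alpha})\frac{2}{\alpha} , \frac{2\tilde{\lambda}(r_0+\beta/\alpha)}{\alpha})}{{M(\frac{\tilde{\gamma}}{\tilde{\lambda}},(\tilde{\phi}_{0} + \frac{\tilde{\lambda}\beta}{\alpha})\frac{2}{\alpha} , \frac{2\tilde{\lambda}(k + \beta/\alpha)}{\alpha})}}\,B_k(0,T),
$$
we obtain the announced expression.
\end{proof}
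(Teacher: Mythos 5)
Your proof is correct and is exactly the argument the paper intends: the paper states Corollary \ref{corollary3.5} without any written proof, as the immediate combination of Corollary \ref{corollary3.4} with the explicit formula of Corollary \ref{corollary3.3} under the substitution $k=\frac{\tau K-b(\tau)}{A(\tau)}$. One small slip in your surrounding discussion (not in the proof itself): the admissibility condition inherited from Proposition \ref{prop3.2} is $r_0\le k$, i.e. $r_0\le\frac{\tau K-b(\tau)}{A(\tau)}$, not the reverse inequality you wrote.
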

%%%%%%%%%%%%%%%%%
\section{appendix: Novikov's condition}
In this section, we check Novikov's condition which allows us to apply Girsanov's Theorem.
\begin{lemma}\label{Le}
$E^Q_{\tilde{r}_0}(e^{\frac{1}{2}\int_0^T\theta^2 \tilde{r}_s ds}) < +\infty$ .
\begin{proof}
Indeed, for all $a > 0$, $E^Q_{\tilde{r}_0}(e^{\int_0^T a \tilde{r}_s ds}) = E^Q_{\tilde{r}_0}(e^{-\int_0^T(-a)\tilde{r}_s ds})$.

Let ${r'}_s = -a\,\tilde{r}_s < 0$ , then ${r'}(t)$ satisfies
\begin{equation}
d{r'}(t) = (\phi'-\lambda {r'}(t))dt + \sqrt{\alpha' {r'}(t)} d{w'}(t)\,,
\end{equation}
with $\phi' = -a(\phi + \frac{\lambda \beta}{\alpha})$, $\alpha' = -\alpha$ et ${w'}(t) = -w(t)$.

Denote
$$
\mu' = \lambda\sqrt{1 + \frac{2\alpha'}{\lambda^2}}\,
$$
%%%%%%%%%%%%%%%
$$
k' = \frac{\mu' - \lambda}{\mu' + \lambda}\,,
$$
%%%%%%%%%%%%%%%
then
\begin{eqnarray*}
E^Q_{\tilde{r}_0}(e^{\int_0^T a \tilde{r}_s ds}) &=& E^Q_{{r'}_0}(e^{\int_0^T (-{r'}_s) ds})\\ 
&=& B_{{r'}_0}(T)\\
&=& e^{A'(T){r'}_0 + b'(T)}.
\end{eqnarray*}
%%%%%%%%%%%%%%%%
The issue here is to show that $\lim\limits_{T \rightarrow +\infty}\ (A'(T){r'}_0 + b'(T)) < +\infty$.
$\bigskip$

If $\lambda > 0$, then $A'(T) = \displaystyle\frac{1 + k'}{\mu'}\frac{1 - e^{-\mu'T}}{1 + k'e^{-\mu'T}}$, $b'(T) = \displaystyle\frac{2\phi'}{\alpha'}\log(\frac{1 + k'e^{-\mu'T}}{1 + k'}) + \displaystyle\frac{\phi'(\mu' - \lambda)}{\alpha'}T$, and
$$
\lim_{T \rightarrow +\infty}\ \left(A'(T){r'}_0 + b'(T)\right) < +\infty \text{ iff } \phi + \frac{\lambda \beta}{\alpha} > 0.
$$
If $\lambda < 0$, then $b'(T) = \displaystyle\frac{2\phi'}{\alpha'}\log\left\lvert\frac{1 + k'e^{-\mu'T}}{1 + k'}\right\rvert + \frac{\phi'(\mu' - \lambda)}{\alpha'}T$, and
$$
\lim_{T \rightarrow +\infty}\ (A'(T){r'}_0 + b'(T)) < +\infty \text{ iff } \phi + \frac{\lambda \beta}{\alpha} < 0.
$$
In these cases, it suffices to take $a = \frac{\theta^2}{2}$ in order to obtain Novikov's condition.
\end{proof}
\end{lemma}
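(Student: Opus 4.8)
The plan is to recognize the exponential moment as an exponential-affine functional of the square-root process $\tilde r$ and to reduce its finiteness to the boundedness of the associated affine coefficients. Fix $a=\frac{\theta^2}{2}>0$, so that the quantity to control is $E^Q_{\tilde r_0}(e^{a\int_0^T\tilde r_s\,ds})$, where $\tilde r$ solves the square-root equation \eqref{eq.3.1}. The key device is to turn the \emph{positive} exponent into a discount factor of an auxiliary process: setting $r'_s=-a\,\tilde r_s$, an application of It\^o's formula to \eqref{eq.3.1} shows that $r'$ again solves a square-root equation, with drift coefficient $\phi'=-a(\phi+\frac{\lambda\beta}{\alpha})$, mean-reversion rate $\lambda$ and a (now negative) diffusion parameter $\alpha'$, driven by $w'=-w$. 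With this substitution $E^Q_{\tilde r_0}(e^{a\int_0^T\tilde r_s ds})=E^Q_{r'_0}(e^{-\int_0^T r'_s ds})=B_{r'_0}(T)$, i.e. exactly the zero-coupon bond price of the auxiliary process.

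Second, I would invoke the affine structure of the model in the sense of Duffie and Kan \cite{DuffieKan1996}: because $r'$ solves a square-root equation, its bond price is exponential-affine, $B_{r'_0}(T)=e^{A'(T)r'_0+b'(T)}$, where $A'$ and $b'$ solve the usual Riccati and linear ODEs with $A'(0)=b'(0)=0$. Solving these ODEs explicitly yields the closed forms for $A'(T)$ and $b'(T)$ in terms of $\mu'=\lambda\sqrt{1+2\alpha'/\lambda^2}$ and $k'=\frac{\mu'-\lambda}{\mu'+\lambda}$. For a fixed finite $T$ the expectation can be infinite only if these coefficients blow up, i.e. if the denominator $1+k'e^{-\mu'T}$ vanishes; so the whole problem reduces to ruling out such a finite-time explosion of the bond price.

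Third, I would establish finiteness by controlling the coefficients as $T\to\infty$, which simultaneously bounds $B_{r'_0}(T)$ for every $T$. The dominant contribution is the term linear in $T$ inside $b'(T)$, namely $\frac{\phi'(\mu'-\lambda)}{\alpha'}T$; I would compute its sign and show that the remaining logarithmic and $A'(T)r'_0$ terms stay bounded, so that $\lim_{T\to\infty}\bigl(A'(T)r'_0+b'(T)\bigr)$ is finite. Carrying out the sign analysis separately for $\lambda>0$ and $\lambda<0$, the linear coefficient is nonpositive precisely when $\phi+\frac{\lambda\beta}{\alpha}>0$ (resp. $<0$), which holds under the standing hypotheses; the finite limit then yields a bound holding for all $T$, and taking $a=\frac{\theta^2}{2}$ gives Novikov's condition.

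The main obstacle is that the auxiliary process is \emph{not} a genuine CIR model: its diffusion parameter $\alpha'$ is negative, so $r'$ need not be nonnegative and $\mu'$ need not be real. The delicate step is therefore to justify that the affine bond formula persists in this regime, by analytic continuation of the Riccati solution, and, crucially, to verify that $1+k'e^{-\mu'T}$ never vanishes on $[0,\infty)$, so that no finite-time pole appears and the closed-form expression genuinely computes the (finite) exponential moment. Once that is secured, the sign computation of the linear term is routine.
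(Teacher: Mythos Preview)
Your proposal follows the paper's argument essentially step for step: the same substitution $r'_s=-a\tilde r_s$, the same identification $E^Q_{\tilde r_0}(e^{a\int_0^T\tilde r_s\,ds})=B_{r'_0}(T)=e^{A'(T)r'_0+b'(T)}$ via the Duffie--Kan affine formula, and the same asymptotic analysis of $A'(T)r'_0+b'(T)$ split into the cases $\lambda>0$ and $\lambda<0$. Your final paragraph, flagging that $\alpha'<0$ forces one to justify the affine formula by analytic continuation and to rule out a finite-time pole of $1+k'e^{-\mu'T}$, is a legitimate caveat that the paper itself does not make explicit.
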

%%%%%%%%%%%%%%%%
\begin{rmq}
$B_{{r'}_0}(T)$ {\em is the price of discount bond at time} $T$ {\em worth one euro at time} $t=0$, {\em for the instantaneous rate given by }${-r'}_t = a\,\tilde{r}_t.$ 
\end{rmq}
%%%%%%%%%%%%%%%%%
\section{Appendix : Simulation}
We illustrate in this section the valuation of the European call option on the maximum of the instantaneous rate in the one factor affine mode bu using its explicite formula of the Laplace's transform given in the proposition \ref{prop3.2}. The functions and the procedure which are used are programmed in Maple and their codes are given in the following section. 
\subsection{Numerical Simulations}
\begin{center}

$\bold{Table1}$. Parameters for European call option in the one factor affine model

\begin{tabular}{c c c c c c c c}
\hline
Case & $\phi$ & $\lambda$ & $\alpha$ & $\beta$ & $T$ & $\tau$ & $K$\\

\hline
$1$ & $0.02$ & $0.2$ & $0.02$ & $0.002$ & $1$ & $10$ & $0.1$\\

$2$ & $0.02$ & $0.2$ & $0.02$ & $0.002$ & $1$ & $0.25$ & $0.1$\\

$3$ & $0.02$ & $0.2$ & $0.02$ & $0.002$ & $0.25$ & $10$ & $0.1$\\

$4$ & $0.02$ & $0.2$ & $0.02$ & $0.002$ & $0.25$ & $0.25$ & $0.1$\\

$5$ & $0.02$ & $0.2$ & $0.0002$ & $0.00002$ & $1$ & $10$ & $0.1$\\

$6$ & $0.02$ & $0.2$ & $0.0002$ & $0.00002$ & $1$ & $0.25$ & $0.1$\\

$7$ & $0.02$ & $0.2$ & $0.0002$ & $0.00002$ & $0.25$ & $10$ & $0.1$\\

$8$ & $0.02$ & $0.2$ & $0.0002$ & $0.00002$ & $0.25$ & $0.25$ & $0.1$\\
\hline
\end{tabular}
\end{center}

\begin{center}

$\bold{Table2}$. Prices of European call option on maximum in the one factor affine model

\begin{tabular}{c c c c c c c c}
\hline
Case & $n=m=5$ & $n=m=6$ & $n=m=7$ & $n=m=8$ & $n=m=9$ & $n=m=10$\\

\hline
$1$ & $0.0013974$ & $0.0018056$ & $0.0026666$ & $0.0035363$ & $0.0037346$ & $0.0045306$\\

$2$ & $0.0035322$ & $0.0045639$ & $0.0067402$ & $0.0089386$ & $0.0094398$ & $0.0114518$\\

$3$ & $0.0003312$ & $0.0004357$ & $0.0005177$ & $0.0005772$ & $0.0006170$ & $0.0010445$\\

$4$ & $0.00083718$ & $0.0011014$ & $0.0013088$ & $0.0014589$ & $0.0015595$ & $0.0026402$\\

$5$ & $0.0003745$ & $0.0004252$ & $0.0006730$ & $0.0009531$ & $0.0009302$ & $0.0011855$\\

$6$ & $0.0008460$ & $0.0009607$ & $0.0015204$ & $0.0021532$ & $0.0021014$ & $0.0026781$\\

$7$ & $0.0001336$ & $0.0001634$ & $0.0001853$ & $0.0002002$ & $0.0002093$ & $0.0003791$\\

$8$ & $0.0003018$ & $0.0003691$ & $0.0004187$ & $0.0004523$ & $0.0004729$ & $0.0008564$\\
\hline
\end{tabular}
\end{center}

\subsection{Program}

The functions and the procedures implemented in Maple are given below :

\begin{verbatim}

Digits := 10;

# the parameters of our model are given by :
p:=10; r_{0}:=0.1; phi:=0.02; lambda:=0.2; alpha:=0.02; beta:=0.002; k:=0.1; 
h:=evalf(exp(-k))

theta:=evalf((-lambda + sqrt(lambda^2+2*alpha))/sqrt(alpha))
phitild:=phi + lambda*beta/alpha

lamdatild:=evalf(lambda + theta*sqrt(alpha))
gamatild:= atild - beta/alpha + evalf(theta*(phi+lambda*beta/alpha)/sqrt(alpha))

mu:=lambda*sqrt(1+2*alpha/lambda^2)
K:=simplify((mu-lambda)/(mu+lambda))

A(T):={1+K}/{mu}*(1-exp(-mu*tau))/(mu*(1+K*exp(-mu*tau))) 

b(T):= 2*phitild*log((1+K*exp(-mu*tau))/(1+K))/alpha
+(phitild*(mu-lambda)/alpha-beta/alpha)*tau + beta*A(tau)/alpha

Kmmer:=convert(series(KummerM(C,B,z),z,p), polynom)   # here, p is fixed on 10

# We do a change of variable: u= exp(-v) and w=exp(-T) 
in the double integral to discritiser the domain.

f(u,w):= exp(-theta*log(u)/sqrt(alpha)+atild*log(w) + A(-log(w))*log(u)
-b(-log(w)))/expand(subs(C = gamatild/lamdatild, B = 2*phitild/alpha,
z = 2*(-log(u)+beta/alpha)*lamdatild/alpha, Kmmer))*(u*w)

Doubleintegral:=proc (a,b,n,c,d,m::integer,f) 
local i, j, delta1, delta2, xx, yy, F, k; 
delta1:=(b-a)/n; 
delta2:=(d-c)/m; 
xx:=vector[row](n+1); 
yy:=vector[row](m+1); 
F:=array(1..n+1,1..m+1); 
for i to n+1 do
xx[i]:= a+(i-1)*delta1 
od; 
for j to m+1 do 
yy[j]:= c+(j-1)*delta2 
od; 
for i from 2 to n+1 do 
for j from 2 to m+1 do 
F[i,j]:= expand(simplify(expand(f(xx[i],yy[j])))); 
print(i,j);
od; od; 
evalf(F); 
k:= simplify(F[2,2]+F[2,m+1]+F[n+1,2]+F[n+1,m+1]); 
for i from 3 to n do 
k:= simplify(k+2*F[i,2]+2*F[i,m+1])
od; 
for j from 3 to m do 
k:= simplify(k+2*F[2,j]+2*F[n+1,j])
od; 
for i from 3 to n do 
for j from 3 to m do 
k:= simplify(k+4*F[i,j])
od; od; 
k:= (1/4)*k*delta1*delta2; 
simplify(k); 
end;

valdbint:=Doubleintegral(0,h,5,0,1,5,f):  # here, we have: n=m=5

valdbint:=factor(valdbint)

lapcall := expand(subs(C = gamatild/lamdatild, B = 2*phitild/alpha, 
z = 2*(r_{0}+beta/alpha)*lamdatild/alpha, Kmmer))*valdbint

with(inttrans):

invlaplace(lapcall,atild,1) # here, we choose for example, the time of maturity T= 1

C(supr_{u}, 0, k, T=1):= exp(-theta*r_{0}/sqrt(alpha))*invlaplace(lapcall,atild,1)

C(supY(u,u + tau) := (A(10)/10)*C(supr_{u}, 0, k, T=1) # here, we take tau = 10
\end{verbatim}
%%%%%%%%%%%%%%%%
%\bibliographystyle{plain}
%\bibliography{biblio.bib}

\end{document}